\newcommand{\Inn}{\ensuremath{\mathbf{Inn}}}
\newcommand{\Games}{\ensuremath{\mathbf{Gam}}}
\newcommand{\restrict}{\upharpoonright}
\newcommand{\implies}{\Longrightarrow}
\newcommand{\from}{\leftarrow}
\newcommand{\tto}{\Rightarrow}
\newcommand{\leqa}{\unlhd}
\newcommand{\hhole}{\square}
\newcommand{\arena}[1]{{#1}}
\newcommand{\game}[1]{\mathbb{#1}}
\newcommand{\names}{\mathbf{T}}
\newcommand{\intr}[1]{\llbracket #1 \rrbracket}
\newcommand{\bigintr}[1]{\left\llbracket#1\right\rrbracket}
\newlength{\viewht}
\newlength{\viewlift}
\newlength{\viewdp}
\newlength{\viewdrop}
\newcommand{\pview}[1]{%                   Player view
\settoheight{\viewht}{\makebox{$#1$}}% get height
\setlength{\viewlift}{\viewht}%
\addtolength{\viewlift}{-1ex}%
\raisebox{0.3\viewlift}{%       raise the left corner
  \makebox{$\ulcorner$}}%  put the corner in
  \!#1\!%                  put the argument in and squish up a bit
\settoheight{\viewht}{\makebox{$#1$}}% get height again!
\setlength{\viewlift}{\viewht}%
\addtolength{\viewlift}{-1ex}%
\raisebox{0.3\viewlift}{%       raise the right corner
  \makebox{$\urcorner$}}%  put it in
}
\newcommand{\oview}[1]{%                  Opponent view
\settodepth{\viewdp}{\makebox{$#1$}}% get depth
\setlength{\viewdrop}{0.3\viewdp}%
\addtolength{\viewdrop}{0.5ex}%
\raisebox{-\viewdrop}{%       lower the left corner
  \makebox{$\llcorner$}}%  put the corner in
  \!#1\!%                  put the argument in and squish up a bit
\settodepth{\viewdp}{\makebox{$#1$}}% get depth
\setlength{\viewdrop}{0.3\viewdp}%
\addtolength{\viewdrop}{0.5ex}%
\raisebox{-\viewdrop}{%    lower the right corner
  \makebox{$\lrcorner$}}%  put it in
}
\newtheorem{prop}{\textsc{Proposition}}
\newtheorem{thm}{\textsc{Theorem}}
\author{Pierre \textsc{Clairambault}\\ \texttt{pierre.clairambault@pps.jussieu.fr}\\
PPS --- Universit\'e Paris 7}
\title{Least and Greatest Fixpoints\\ in Game Semantics}
\date{}
\begin{document}

\maketitle
\begin{abstract}
We show how solutions to many recursive arena equations can be computed in a natural way by allowing loops in arenas. 
We then equip arenas with winning functions and total winning strategies. We present two natural winning conditions 
compatible with the loop construction which respectively provide initial algebras and terminal coalgebras for a large 
class of continuous functors. Finally, we introduce an intuitionistic sequent calculus, extended with syntactic 
constructions for least and greatest fixed points, and prove it has a sound and (in a certain weak sense) complete
interpretation in our game model.
\end{abstract}

\section{Introduction}

The idea to model logic by game-theoretic tools can be traced back to the work of Lorenzen \cite{lorenzen1960la}.
The idea is to interpret a formula by a game between two players O and P, O trying to refute the formula and P trying to
prove it. The formula $A$ is then valid if P has a \emph{winning strategy} on the interpretation of $A$. Later, 
Joyal remarked \cite{joyal:rtj} that it is possible to compose strategies in Conway games \cite{conway2001nag} in an associative
way, thus giving rise to the first category of games and strategies. This, along with parallel developments in Linear Logic
and Geometry of Interaction, led to the more recent construction of compositional
game models for a large variety of logics \cite{abramsky1992gaf,mellies2005agf,delataillade2007sot} and programming
languages \cite{hyland2000fap,abramsky2000fap,mccusker2000gaf,abramsky1998fag}.

We aim here to use these tools to model an intuitionistic logic with induction and coinduction. Inductive/coinductive definitions
in syntax have been defined and studied in a large variety of settings, such as linear logic \cite{baelde2007lag}, 
$\lambda$-calculus \cite{abel2000psn} or Martin-L\"of's type theory \cite{dybjer}. Motivations are multiple, but
generally amount to increasing the expressive power of a language without paying the price of  
exponential modalities (as in \cite{baelde2007lag}) or impredicativity (as in \cite{abel2000psn} or \cite{dybjer}).
However, less work has been carried out when it comes to the semantics of such constructions. Of course we
have the famous order-theoretic Knaster-Tarski fixed point theorem \cite{tarski1955ltf}, the nice categorical theory due 
to Freyd \cite{freyd1990acc}, set-theoretic models \cite{dybjer} 
(for the \emph{strictly positive} fragment) or PER-models \cite{loader1997eti}, but it seems they have been ignored by the current 
trend for intensional models (\emph{i.e.} games semantics, GoI \dots). We fix this issue here, showing that (co)induction admits 
a nice game-theoretic model which arises 
naturally if one enriches McCusker's \cite{mccusker2000gaf} work on recursive types with winning functions inspired by parity 
games \cite{santocanale2002fmul}.

In Section 2, we first recall the basic definitions of the Hyland-Ong-Nickau setting of game semantics. Then we sketch 
McCusker's interpretation of recursive types, and show how most of these recursive types can be modelled by means
of loops in the arenas. For this purpose, we define a class of functors called \emph{open functors}, including in particular
all the endofunctors built out of the basic type constructors. We also present a mechanism of winning functions inspired by 
\cite{hyland1997gs}, allowing us to build a category \Games~of games and total winning strategies. In section 3, we present 
$\mu LJ$, the intuitionistic sequent calculus with least and greatest fixpoints that we aim to model. We briefly discuss its 
proof-theoretic properties, then present its semantic counterpart: we show how to build initial algebras and terminal 
coalgebras to most positive open functors. Finally, we use this semantic account of (co)induction to give a sound and 
(weakly) complete interpretation of $\mu LJ$ in \Games.

\section{Arena Games}
\subsection{Arenas and Plays}

We recall briefly the now usual definitions of arena games, introduced in \cite{hyland2000fap}. More detailed accounts
can be found in \cite{mccusker2000gaf,harmer2004igs}. We are interested in games with two participants: Opponent
(O, the \emph{environment}) and Player (P, the \emph{program}). Possible plays are generated by directed graphs
called \emph{arenas}, which are semantic versions of \emph{types} or \emph{formulas}. Hence, a play is a 
sequence of \emph{moves} of the ambient arena, each of them being annotated by a \emph{pointer} to an earlier
move --- these pointers being required to comply with the structure of the arena.
Formally, an \textbf{arena} is a structure $\arena{A}=(M_\arena{A},\lambda_\arena{A},\vdash_\arena{A})$ where:
\begin{itemize}
\item $M_\arena{A}$ is a set of \textbf{moves},
\item $\lambda_\arena{A}:M_\arena{A} \to \{O,P\}\times \{Q,A\}$ is a \textbf{labelling} function indicating whether
a move is an Opponent or Player move, and whether it is a question (Q) or an answer (A). We write $\lambda_\arena{A}^{OP}$
for the projection of $\lambda_\arena{A}$ to $\{O,P\}$ and $\lambda_\arena{A}^{QA}$ for its projection on $\{Q,A\}$. 
$\overline{\lambda_\arena{A}}$ will denote $\lambda_\arena{A}$ where the $\{O,P\}$ part has been reversed.
\item $\vdash_\arena{A}$ is a relation between $M_\arena{A} + \{\star\}$ to $M_\arena{A}$, called \textbf{enabling},
satisfying:
\begin{itemize}
\item $\star\vdash m \implies \lambda_\arena{A}(m) = OQ$;
\item $m \vdash_\arena{A} n \wedge \lambda_\arena{A}^{QA}(n)=A \implies \lambda_\arena{A}^{QA}(m)=Q$;
\item $m\vdash_\arena{A} n \wedge m\neq \star \implies \lambda_\arena{A}^{OP}(m)\neq \lambda_\arena{A}^{OP}(n)$.
\end{itemize}
\end{itemize}
In other terms, an arena is a directed bipartite graph, with a set of distinguished \textbf{initial} moves ($m$ such that
$\star\vdash_\arena{A} m$) and a distinguished set of \textbf{answers} ($m$ such that $\lambda_\arena{A}^{QA}=A$) such that
no answer points to another answer. We now define plays as \textbf{justified sequences} over $\arena{A}$: these are 
sequences $s$ of moves of $\arena{A}$, each non-initial move $m$ in $s$ being equipped with a pointer to an earlier move 
$n$ in $s$, satisfying $n\vdash_\arena{A} m$. In other words, a justified sequence $s$ over $\arena{A}$ is such that
each reversed pointer chain $s_{\phi(0)}\from s_{\phi(1)} \from \dots \from s_{\phi(n)}$ is a path on $\arena{A}$, viewed
as a directed bipartite graph. 

The role of pointers is to allow \emph{reopenings} in plays. Indeed, a path on $\arena{A}$ may be (slightly naively) 
understood as a linear play on $\arena{A}$, and a justified sequence as an interleaving of paths, with possible duplications
of some of them. This intuition is made precise in \cite{harmer2007cci}. When writing justified sequences, we will often omit the 
justification information if this does not cause any ambiguity. $\sqsubseteq$ will denote the prefix ordering on justified 
sequences. If $s$ is a justified sequence on $\arena{A}$, $|s|$ will denote its length.

Given a justified sequence $s$ on $\arena{A}$, it has two subsequences of particular interest: the P-view and O-view.
The view for P (resp. O) may be understood as the subsequence of the play where P (resp. O) only sees his own duplications.
In a P-view, O never points more than once to a given P-move, thus he must always point to the previous move. Concretely,
P-views correspond to branches of B\"ohm trees \cite{hyland2000fap}. Practically, the P-view $\pview{s}$ of $s$ is computed by forgetting everything
under Opponent's pointers, in the following recursive way:
\begin{itemize}
\item $\pview{sm} = \pview{s}m$ if $\lambda_\arena{A}^{OP}(m)=P$;
\item $\pview{sm} = m$ if $\star\vdash_\arena{A} m$ and $m$ has no justification pointer;
\item $\pview{s_1m s_2n} = \pview{s}mn$ if $\lambda_\arena{A}^{OP}(n)=O$ and $n$ points to $m$.
\end{itemize}
The O-view $\oview{s}$ of $s$ is defined dually. Note that in some cases --- in fact if $s$ does not satisfies the 
\emph{visibility condition} introduced below --- $\pview{s}$ and $\oview{s}$ may not be correct justified sequences, since some
moves may have pointed to erased parts of the play. However, we will restrict to plays where this does not happen.
The \textbf{legal sequences} over $\arena{A}$, denoted by $\mathcal{L}_\arena{A}$, are the justified sequences $s$ on $\arena{A}$
satisfying the following conditions:
\begin{itemize}
\item \textbf{Alternation.} If $tmn \sqsubseteq s$, then $\lambda_\arena{A}^{OP}(m)\neq \lambda_\arena{A}^{OP}(n)$;
\item \textbf{Bracketing.} A question $q$ is \textbf{answered} by $a$ if $a$ is an answer and $a$ points to $q$. A question
$q$ is \textbf{open} in $s$ if it has not yet been answered. We require that each answer points to the \emph{pending} question,
\emph{i.e.} the last open question.
\item \textbf{Visibility.} If $tm \sqsubseteq s$ and $m$ is not initial, then if $\lambda_\arena{A}^{OP}(m)=P$ the 
justifier of $m$ appears in $\pview{t}$, otherwise its justifier appears in $\oview{t}$.
\end{itemize} 

\subsection{The cartesian closed category of Innocent strategies}

A \textbf{strategy} $\sigma$ on $\arena{A}$ is a prefix-closed set of even-length legal plays on $\arena{A}$. A strategy
is \textbf{deterministic} if only Opponent branches, \emph{i.e.} $\forall smn,smn'\in \sigma,~n=n'$. Of course, if $\arena{A}$
represents a type (or formula), there are often many more strategies on $\arena{A}$ than programs (or proofs) on this type. 
To address this issue we need \textbf{innocence}. An innocent strategy is a strategy $\sigma$ such that 
\[
sab\in \sigma \wedge t\in \sigma \wedge ta\in \mathcal{L}_\arena{A} \wedge \pview{sa}=\pview{ta} \implies tab\in \sigma
\]

We now recall how arenas and innocent strategies organize themselves into a cartesian closed category. First, we build
the \textbf{product} $\arena{A}\times \arena{B}$ of two arenas $\arena{A}$ and $\arena{B}$:
\begin{eqnarray*}
M_{\arena{A}\times \arena{B}} &=& M_\arena{A} + M_\arena{B}\\
\lambda_{\arena{A}\times \arena{B}} &=& [\lambda_\arena{A},\lambda_\arena{B}]\\
\vdash_{\arena{A}\times \arena{B}} &=&~\vdash_\arena{A} + \vdash_\arena{B}
\end{eqnarray*}

We mention the empty arena $\arena{I} = (\emptyset,\emptyset,\emptyset)$, which will be terminal for the
category of arenas and innocent strategies. We mention as well the arena $\arena{\bot} = (\bullet,\bullet \mapsto OQ,(\star,\bullet))$
with only one initial move, which will be a weak initial object.
We define the \textbf{arrow} $\arena{A}\tto \arena{B}$ as follows:
\begin{eqnarray*}
M_{\arena{A}\tto \arena{B}} &=& M_\arena{A} + M_\arena{B}\\
\lambda_{\arena{A}\tto \arena{B}} &=& [\overline{\lambda_\arena{A}},\lambda_\arena{B}]\\
m \vdash_{\arena{A}\tto \arena{B}} n &\Leftrightarrow& \left\{\begin{array}{l}
				m \neq \star \wedge m \vdash_\arena{A} n \\
				m \neq \star \wedge m \vdash_\arena{B} n \\
				\star\vdash_\arena{B} m \wedge \star\vdash_\arena{A} n \\
				m = \star \wedge \star \vdash_\arena{B} n \end{array}\right.
\end{eqnarray*}

We define composition of strategies by the usual parallel interaction plus hiding mechanism. 
If $\arena{A}$, $\arena{B}$ and $\arena{C}$ are arenas, we define the set of \textbf{interactions} 
$I(\arena{A},\arena{B},\arena{C})$ as the set of justified sequences $u$ over $\arena{A}$, $\arena{B}$
and $\arena{C}$ such that $u_{\restrict_{\arena{A},\arena{B}}}\in \mathcal{L}_{\arena{A}\tto \arena{B}}$,
$u_{\restrict_{\arena{B},\arena{C}}}\in \mathcal{L}_{\arena{B}\tto \arena{C}}$ and 
$u_{\restrict_{\arena{A},\arena{C}}}\in \mathcal{L}_{\arena{A}\tto\arena{C}}$. Then, if $\sigma:\arena{A}\tto \arena{B}$
and $\tau:\arena{B}\tto \arena{C}$, we define parallel interaction:
\[
\sigma || \tau = \{u\in I(\arena{A},\arena{B},\arena{C}) ~|~ u_{\restrict_{\arena{A},\arena{B}}}\in \sigma \wedge u_{\restrict_{\arena{B},\arena{C}}}\in \tau\}
\]
Composition is then defined as $\sigma;\tau = \{u_{\restrict_{\arena{A},\arena{C}}}~|~u\in \sigma||\tau\}$. It is associative 
and preserves innocence (a proof of these facts can be found in \cite{hyland2000fap} or \cite{harmer2004igs}). 
We also define the identity on $\arena{A}$ as the copycat strategy (see \cite{mccusker2000gaf} or \cite{harmer2004igs} for a definition) 
on $\arena{A}\tto \arena{A}$. Thus, there is a category \Inn~which has arenas as objects and innocent strategies on 
$\arena{A}\tto \arena{B}$ as morphisms from $\arena{A}$ to $\arena{B}$. In fact, this category is cartesian closed, the cartesian 
structure given by the arena product above and the exponential closure given by the arrow construction.
This category is also equipped with a weak coproduct $\arena{A}+ \arena{B}$ \cite{mccusker2000gaf}, which is constructed as follows:
\begin{eqnarray*}
M_{\arena{A}+\arena{B}} &=& M_\arena{A} + M_{\arena{B}} + \{q,L,R\}\\
\lambda_{\arena{A}+\arena{B}} &=& [\lambda_\arena{A},\lambda_\arena{B},q\mapsto OQ,L\mapsto PA, R\mapsto PA]\\
m \vdash_{\arena{A}+\arena{B}} n &\Leftrightarrow&
	\left\{\begin{array}{l}m,n\in M_\arena{A} \wedge m\vdash_\arena{A} n \\
			       m,n\in M_\arena{B} \wedge m\vdash_\arena{B} n \\
			       m=\star \wedge n=q\\
			       (m=q \wedge n=L) \vee (m=q \wedge n=R)\\
			       (m=L \wedge \star\vdash_\arena{A} n) \vee (m=R \wedge \star\vdash_\arena{B} n)\end{array}\right.
\end{eqnarray*}

\label{arenasplays}
\subsection{Recursive types and Loops}

Let us recall briefly the interpretation of recursive types in game semantics, due to McCusker \cite{mccusker2000gaf}.
Following \cite{mccusker2000gaf}, we first define an ordering $\leqa$ on arenas as follows. For two arenas $\arena{A}$
and $\arena{B}$, $\arena{A}\leqa \arena{B}$ iff
\begin{eqnarray*}
M_\arena{A} &\subseteq& M_\arena{B}\\
\lambda_\arena{A} &=& {\lambda_\arena{B}}_{\restrict_{M_\arena{A}}}\\
\vdash_\arena{A} &=& ~\vdash_\arena{B}\cap~(M_\arena{A}+\{\star\}\times M_\arena{A})\\
\end{eqnarray*}
This defines a (large) dcpo, with least element $\arena{I}$ and directed sups given by the componentwise union.
If $F:\Inn\to \Inn$ is a functor which is continuous with respect to $\leqa$, we 
can find an arena $\arena{D}$ such that $\arena{D}=F(\arena{D})$ in the usual way by setting
$D=\bigsqcup_{n=0}^{\infty}F^n(\arena{I})$. McCusker showed \cite{mccusker2000gaf} that when the functors are \emph{closed}
(\emph{i.e.} their action can be internalized as a morphism $(A\tto B)\to (F A \tto F B)$), and 
when they preserve inclusion and projection morphisms (\emph{i.e.} partial copycat strategies) corresponding to $\leqa$, 
this construction defines \emph{minimal invariants} \cite{freyd1990acc}.
Note that the crucial cases of these constructions are the functors built out of the product, sum and
function space constructions.

We give now a concrete and new (up to the author's knowledge) description of a large class of continuous functors, that
we call \textbf{open functors}. These include all the functors built out of the basic constructions, and
allow a rereading of recursive types, leading to the model of (co)induction. 
\subsubsection{Open arenas.}Let $\names$ be a countable set of names. 
An \textbf{open arena} is an arena $\arena{A}$ with distinguished question moves called \textbf{holes}, 
each of them labelled by an element of $\names$. We denote by $\hhole_X$ the holes annotated by $X\in \names$.
We will sometimes write $\hhole^P_X$ to denote a hole of Player polarity, or $\hhole^O_X$ to denote a hole
of Opponent polarity. If $\arena{A}$ has holes labelled by $X_1,\dots,X_n$, we denote it by $\arena{A}[X_1,\dots,X_n]$. 
By abuse of notation, the corresponding open functor we are going to build will be also denoted by 
$\arena{A}[X_1,\dots,X_n] : (\Inn \times \Inn^{op})^n \to \Inn$.

\subsubsection{Image of arenas.}If $\arena{A}[X_1,\dots,X_n]$ is an open arena and
$\arena{B}_1,\dots,\arena{B}_n,\arena{B}'_1,\dots,\arena{B}'_{n}$
are arenas (possibly open as well), we build a new arena $\arena{A}(\arena{B}_1,\arena{B}'_1,\dots,\arena{B}_n,\arena{B}'_n)$ by replacing
each occurrence of $\hhole^P_{X_i}$ by $\arena{B}_{i}$ and each occurrence of $\hhole^O_{X_i}$
by $\arena{B}'_{i}$. More formally:
\begin{eqnarray*}
M_{\arena{A}(\arena{B}_1,\arena{B}'_1,\dots,\arena{B}_n,\arena{B}'_n)}&=&(M_\arena{A}\setminus\{\hhole_{X_1},\dots,\hhole_{X_n}\})+\sum_{i=1}^{n}{(M_{\arena{B}_i}+M_{\arena{B}'_i})}\\
\lambda_{\arena{A}(\arena{B}_1,\arena{B}'_1,\dots,\arena{B}_n,\arena{B}'_n)}&=&[\lambda_\arena{A},\lambda_{\arena{B}_1},\overline{\lambda_{\arena{B}'_1}},\dots,\lambda_{\arena{B}_n},\overline{\lambda_{\arena{B}'_n}}]\\
m\vdash_{\arena{A}(\arena{B}_1,\arena{B}'_1,\dots,\arena{B}_n,\arena{B}'_{n})} p &\Leftrightarrow& \left\{\begin{array}{l}
m \vdash_\arena{A} \hhole^P_{X_i} \wedge \star\vdash_{\arena{B}_{i}} p\\
m \vdash_\arena{A} \hhole^O_{X_i} \wedge \star\vdash_{\arena{B}'_i} p\\
\star\vdash_{\arena{B}_{i}} m \wedge \hhole^P_{X_i} \vdash_\arena{A} p\\
\star\vdash_{\arena{B}'_i}  m \wedge \hhole^O_{X_i} \vdash_\arena{A} p\\
m \vdash_{\arena{B}_i} p\\
m \vdash_{\arena{B}'_i} p\\
m \vdash_{\arena{A}} p
\end{array}\right.
\end{eqnarray*}
Note that in this definition, we assimilate all the moves sharing the same hole \emph{label} $\hhole_{X_i}$ and with 
the same polarity. This helps to clarify notations, and is justified by the fact that we never need to distinguish 
moves with the same hole label, apart from when they have different polarity.

\subsubsection{Image of strategies.} If $\arena{A}$ is an arena, we will, by abuse of notation, denote by $I_\arena{A}$ both
the set of initial moves of $\arena{A}$ and the subarena of $\arena{A}$ with only these moves. Let $\arena{A}[X_1,\dots,X_n]$ 
be an open arena, $B'_1,B_1,\dots,B'_n,B_n$ and $C'_1,C_1,\dots,C'_n,C_n$ be arenas.
Consider the application $\xi$ defined on moves as follows:
\[
\xi(x)=\left\{\begin{array}{ll}	\hhole_{X_i}&\txt{if $x\in \bigcup_{i\in \{1,\dots,n\}}{(I_{B'_i}\cup I_{B_i} \cup I_{C'_i} \cup I_{C_i})}$}\\
				x & \txt{otherwise}
		\end{array}\right.
\]
and then extended recursively to an application $\xi^*$ on legal plays as follows:
\[
\xi^*(sa) = \left\{\begin{array}{l}
			\xi^*(s)\txt{ if $a$ is a non-initial move of $B_i,B'_i,C_i$ or $C'_i$}\\
			\xi^*(s)\xi(a)\txt{ otherwise}
		     \end{array}\right.
\]

$\xi^*$ erases moves in the inner parts of $B'_i,B_i,C'_i,C_i$ and agglomerates all the initial 
moves back to the holes. This way we will be able to compare the resulting play with the identity on $\arena{A}[X_1,\dots,X_n]$.
Now, if $\sigma_i:\arena{B}_{i}\to \arena{C}_{i}$ and $\tau_i:\arena{C}'_{i}\to \arena{B}'_{i}$ are strategies, 
we can now define the action of open functors on them by stating:
\[
s\in \arena{A}(\sigma_1,\tau_1,\dots,\sigma_n,\tau_n) \Leftrightarrow
\left\{\begin{array}{l}
\forall i\in \{1,\dots,n\},~s_{\restrict_{B_i\tto C_i}}\in \sigma_i\\
\forall i\in \{1,\dots,n\},~s_{\restrict_{C'_i\tto B'_i}}\in \tau_i\\
\xi^*(s) \in id_{\arena{A}[X_1,\dots,X_n]}
\end{array}\right.
\]

\begin{prop}
For any $\arena{A}[X_1,\dots,X_n]$, this defines a functor 
$\arena{A}[X_1,\dots,X_n]:(\Inn \times \Inn^{op})^n \to \Inn$, which is monotone and continuous with respect to $\leqa$.
\end{prop}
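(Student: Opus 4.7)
The plan is to verify three things in turn: that $\arena{A}(\sigma_1,\tau_1,\dots,\sigma_n,\tau_n)$ is a well-defined innocent strategy; that the assignment is functorial; and that the induced operation on arenas is monotone and continuous with respect to $\leqa$.

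For well-definedness and innocence, prefix closure and even length are immediate from the definition. The key point for legality is that $\xi^*$ drops the internal moves of each $\arena{B}_i,\arena{B}'_i,\arena{C}_i,\arena{C}'_i$ (all even-length, since they are plays in the respective $\sigma_i$ or $\tau_i$) and identifies initial moves with the appropriately polarised holes; combined with the identity condition on the skeleton, this yields alternation, bracketing and visibility on $\arena{A}(\arena{B}_1,\arena{B}'_1,\dots)\tto \arena{A}(\arena{C}_1,\arena{C}'_1,\dots)$. For innocence, given $sab$ in the image and $ta$ legal with $\pview{sa}=\pview{ta}$, I would observe that the P-view decomposes into its restrictions to each $\arena{B}_i\tto \arena{C}_i$, each $\arena{C}'_i\tto \arena{B}'_i$ and to the outer skeleton, so equality of views transfers to each component; innocence of the $\sigma_i$, $\tau_i$ and of the copycat on $\arena{A}[X_1,\dots,X_n]$ then produces $tab$ in the image.

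For functoriality, the identity case follows from the fact that componentwise copycat behaviour together with copycat behaviour on the skeleton forces the whole play to be copycat on $\arena{A}(\arena{B}_1,\arena{B}'_1,\dots)\tto \arena{A}(\arena{B}_1,\arena{B}'_1,\dots)$. Composition, $\arena{A}(\sigma_1;\sigma'_1,\tau'_1;\tau_1,\dots)=\arena{A}(\sigma_1,\tau_1,\dots);\arena{A}(\sigma'_1,\tau'_1,\dots)$, I would establish by a bijection between plays on the left-hand side and composable pairs on the right, mediated by an interaction in the three-arena space $I(\arena{A}(\arena{B}_1,\arena{B}'_1,\dots),\arena{A}(\arena{C}_1,\arena{C}'_1,\dots),\arena{A}(\arena{D}_1,\arena{D}'_1,\dots))$ whose pairwise restrictions reassemble the witnesses for the $\sigma_i \| \sigma'_i$ and $\tau'_i \| \tau_i$; the crucial observation is that $\xi^*$ commutes with the three pairwise projections, so the identity-on-skeleton condition decomposes correctly.

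Monotonicity and continuity of the arena-level map $(\arena{B}_1,\arena{B}'_1,\dots)\mapsto \arena{A}(\arena{B}_1,\arena{B}'_1,\dots)$ are essentially syntactic: its moves, labelling and enabling are built by disjoint sums, copairings of labellings and disjoint unions of enabling relations, while $\leqa$-sups are computed pointwise on precisely these data, so the construction preserves $\leqa$ and commutes with directed unions. I expect the main obstacle to be the composition equation for strategies, which demands careful bookkeeping of pointers in the three-arena interaction to verify that $\xi^*$ and the pairwise restrictions cooperate in such a way that the componentwise conditions and the identity-on-skeleton condition of the outer image genuinely assemble from one another.
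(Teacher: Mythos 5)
Your proposal is correct and follows essentially the same route as the paper, which likewise reduces everything to componentwise innocence plus the identity-on-skeleton condition and treats preservation of identities, composition and continuity as routine. The one point where the paper's sketch is sharper than yours concerns innocence: where you say the P-view ``decomposes into its restrictions to each component,'' the paper isolates the two facts that actually make this work, namely that the three defining clauses are mutually exclusive on Player moves, and that a P-view of $s\in \arena{A}(\sigma_1,\tau_1,\dots,\sigma_n,\tau_n)$ is (essentially) an initial copycat segment followed by the P-view of a \emph{single} $\sigma_i$ or $\tau_i$ --- this is what guarantees both that the restriction of the P-view is the P-view of the restriction and that the responding component, hence the response $b$, is uniquely determined by $\pview{ta}$. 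You would need some version of this structural lemma to justify the phrase ``equality of views transfers to each component.'' Conversely, your proposal is more explicit than the paper on the parts it leaves implicit: the three-arena interaction argument for composition (with the observation that $\xi^*$ commutes with the pairwise projections) and the pointwise computation of $\leqa$-sups for continuity are exactly the right way to discharge what the paper calls ``rather direct.''
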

\begin{proof}[Proof sketch] Preservation of identities and composition are rather direct. A little care 
is needed to show that the resulting strategy is innocent: this relies on two facts: 
First, for each Player move the three definition cases are mutually exclusive. Second, a P-view of 
$s\in  \arena{A}(\sigma_1,\tau_1,\dots,\sigma_n,\tau_n)$ is (essentially) an initial copycat
appended with a P-view of one of $\sigma_i$ or $\tau_i$, hence the P-view of $s$ determines uniquely the P-view presented 
to one of $\sigma_i$, $\tau_i$ or $id_{\arena{A}[X_1,\dots,X_n]}$.
\end{proof}

\paragraph{Example.} Consider the open arena $\arena{A}[X] = \hhole_X \tto \hhole_X$. For any arena $\arena{B}$, 
we have $\arena{A}(\arena{B}) = \arena{B}\tto \arena{B}$ and for any $\sigma : B_1 \to C_1$ and $\tau : C_2 \to B_2$,
we have $\arena{A}(\sigma,\tau) = \tau \tto \sigma : (B_2\tto B_1) \to (C_2\tto C_1)$, the strategy which precomposes
its argument by $\tau$ and postcomposes it by $\sigma$.

\subsubsection{Loops for recursive types.} Since these open functors are monotone and continuous with respect to $\leqa$, 
solutions to their corresponding recursive equations can be obtained by computing the infinite expansion of arenas (\emph{i.e.}
infinite iteration of the open functors).
However, for a large subclass of the open functors, this
solution can be expressed in a simple way by replacing holes with a loop up to the initial moves.
Suppose $\arena{A}[X_1,\dots,X_n]$ is an open functor, and $i$ is such that $\hhole_{X_i}$ appears only in non-initial, positive
positions in $\arena{A}$. Then we define an arena $\mu X_i.\arena{A}$ as follows:
\begin{eqnarray*}
M_{\mu X_i.\arena{A}} &=& (M_\arena{A}\setminus\hhole_{X_i})\\
\lambda_{\mu X_i.\arena{A}} &=& {\lambda_\arena{A}}_{\restrict_{M_{\mu X_i.\arena{A}}}}\\
m\vdash_{\mu X_i.\arena{A}} n &\Leftrightarrow& \left\{\begin{array}{l}
m\vdash_\arena{A} n\\
m\vdash_\arena{A} \hhole_{X_i} \wedge \star\vdash_\arena{A} n\end{array}\right.
\end{eqnarray*}

A simple argument ensures that the obtained arena is isomorphic to the one obtained by iteration of the functor. For this
issue we take inspiration from Laurent \cite{laurent2005cit} and prove a theorem stating that two arenas are isomorphic in 
the categorical sense if and only if their set of paths are isomorphic. A \textbf{path} in $\arena{A}$ is a sequence of 
moves $a_1,\dots,a_n$ such that for all $i\in \{1,\dots,n-1\}$ we have $a_i \vdash_\arena{A} a_{i+1}$. A \textbf{path isomorphism}
between $\arena{A}$ and $\arena{B}$ is a bijection $\phi$ between the set of paths of $\arena{A}$ and the set of paths 
on $\arena{B}$ such that for any non-empty path $p$ on $\arena{A}$, $\phi(ip(p)) = ip(\phi(p))$ (where $ip(p)$ denotes
the immediate prefix of $p$). We have then the theorem:

\begin{thm}Let $\arena{A}$ and $\arena{B}$ be two arenas. They are categorically isomorphic if and only if 
there is a path isomorphism between their respective sets of paths.
\label{iso}
\end{thm}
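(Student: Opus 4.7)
I pass through the structure of P-views of innocent strategies on $\arena{A}\tto \arena{B}$. The intuition is that an isomorphism between arenas should behave like a renaming copycat, where each O move on one side is mirrored by a P response on the other side, with the correspondence dictated by the iso. On such a strategy the $\arena{A}$- and $\arena{B}$-projections of every P-view should form paths (in the sense of the statement) of equal length in the two arenas, paired by the iso. This lets me translate between innocent isomorphisms and path isomorphisms.

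\textbf{Backward direction.} Given a path isomorphism $\phi$, I define an innocent $\sigma:\arena{A}\tto \arena{B}$ by specifying its response on each P-view inductively. Write $p_\arena{A}$, $p_\arena{B}$ for the $\arena{A}$- and $\arena{B}$-projections of the current P-view and maintain the invariant $\phi(p_\arena{A})=p_\arena{B}$. When O opens with an initial $b_0 \in \arena{B}$, the preservation of immediate prefixes forces $\phi^{-1}((b_0))$ to be a length-one path $(a_0)$, and $\sigma$ plays $a_0$ (an initial move of $\arena{A}$). When O extends $p_\arena{A}$ by some $a$, the path $\phi(p_\arena{A}\cdot a)$ extends $\phi(p_\arena{A})=p_\arena{B}$ by a unique move $b$, which $\sigma$ plays (pointing in $\arena{B}$ appropriately); dually if O extends $p_\arena{B}$. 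Since the response depends only on the P-view, this yields a well-defined innocent strategy. Defining $\tau:\arena{B}\tto \arena{A}$ symmetrically from $\phi^{-1}$, the interaction $\sigma;\tau$ cancels the middle arena via $\phi^{-1}\circ\phi=id$ and reduces to copycat on $\arena{A}\tto \arena{A}$; symmetrically $\tau;\sigma=id_\arena{B}$.

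\textbf{Forward direction.} Suppose $\sigma,\tau$ are mutually inverse in $\Inn$. The key lemma is a rigidity statement: every P-view of $\sigma$ is ``linear'' in the sense that its $\arena{A}$- and $\arena{B}$-projections are genuine paths (each move enabled by the previous one in the projection) of equal length. Granted this, I define $\phi$ by sending the $\arena{A}$-projection of a P-view of $\sigma$ to the corresponding $\arena{B}$-projection. Preservation of immediate prefixes is then immediate from prefix-closure of $\sigma$. The inverse is obtained identically from $\tau$; that the two are mutually inverse follows from $\sigma;\tau=id_\arena{A}$ and $\tau;\sigma=id_\arena{B}$.

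\textbf{Main obstacle.} The rigidity lemma is the hardest step. A priori, nothing in innocence alone forbids $\sigma$ from playing several moves in a row on the same side, or from producing $\arena{A}$-projections that branch rather than form a single path. I expect to rule this out by analyzing the interaction $\sigma;\tau$ at the level of interactions: its visible part on $\arena{A}\tto \arena{A}$ must be copycat, and exploiting the innocence of both strategies this should force the ``one O move, one P response on the other side'' pattern together with the required $\vdash$-linearity of the projections at each stage of any P-view of $\sigma$ or $\tau$.
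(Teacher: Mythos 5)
Your overall route --- categorical isomorphisms in $\Inn$ are exactly the ``renaming copycat'' (zig-zag) strategies, and these are in bijection with path isomorphisms --- is the intended one: the paper gives no proof in the text and defers to Laurent's characterisation of isomorphisms of games, which proceeds precisely this way. The genuine gap is that your ``rigidity lemma'' is not a side condition to be checked later; it \emph{is} the forward direction, and your last paragraph only states it and expresses the hope that ``analyzing the interaction should force'' it. What is actually required is: first, totality of $\sigma$ and $\tau$ along every interaction witnessing a play of $id_\arena{A}$ (this does follow from $\sigma;\tau=id_\arena{A}$, since the identity answers every O-move, so on those branches neither strategy may get stuck nor chatter indefinitely); then an induction on the length of interactions $u\in\sigma||\tau$ with $u_{\restrict_{\arena{A},\arena{A}}}$ a copycat play, using determinism, innocence and the switching condition to exclude the two bad cases --- $\sigma$ answering an O-move by a P-move in the \emph{same} component, and $\sigma$ answering in the other component but justifying its move by something other than the image of the immediately preceding move (so that the projection fails to be a $\vdash$-path). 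Neither exclusion follows from innocence alone, as you yourself observe; both require playing $\sigma$ against its inverse and propagating the copycat constraint back through the hiding. Until that induction is written out, the forward direction has no mathematical content.

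The backward direction also skips the legality of the zig-zag plays it generates, and here there is a concrete trap. The $O/P$ part is harmless: along a rooted path the polarity of the $k$-th move is determined by $k$, so any prefix-preserving bijection respects it. But a path isomorphism in the paper's sense is \emph{not} required to preserve $\lambda^{QA}$, and if it does not, your $\sigma$ violates the bracketing condition. For instance, take $\arena{A}$ to be a single initial question with one question child and $\arena{B}$ a single initial question with one answer child: their path sets are isomorphic, yet after Opponent's second question on the $\arena{A}$-side the only available $\arena{B}$-move is an answer whose justifier is not the pending question, so no legal response exists and the two arenas are in fact \emph{not} isomorphic in $\Inn$. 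So either ``path isomorphism'' must be read as additionally preserving the question/answer labelling --- in which case your forward direction must also extract that preservation, which it does exactly because zig-zag plays are only legal when the labels agree across the two components --- or the construction breaks. You should also add a sentence on why the justification pointers you assign make $\sigma;\tau$ literally the copycat on $\arena{A}$: the ``cancellation via $\phi^{-1}\circ\phi=id$'' is the standard copycat-composition argument, but as stated it is an appeal to intuition rather than a proof.
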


Now, it is clear by construction that, if $\arena{A}[X]$ is an open functor such that $\hhole_X$ appears only in non-initial positive
positions in $\arena{A}$, the set of paths of $\bigsqcup_{n=0}^{\infty}\arena{A}^n(\arena{I})$
and of $\mu X.\arena{A}$ are isomorphic. Therefore $\mu X.\arena{A}$ is solution of the recursive equation $X = \arena{A}(X)$, and 
when $\arena{A}[X]$ is closed and preserves inclusions and projections, $\mu X.\arena{A}$ defines as well a minimal invariant for $\arena{A}[X]$. 
But in fact, we have the following fact:

\begin{prop}If $\arena{A}[X]$ is an open functor, then it is closed and preserves inclusions and
projections. Hence $\mu X. \arena{A}$ is a minimal invariant for $\arena{A}[X]$.
\end{prop}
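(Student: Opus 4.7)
The plan is to verify the two hypotheses of McCusker's minimal-invariant theorem: closedness of $\arena{A}[X]$ (meaning its action on morphisms is represented internally by a strategy), and preservation of the inclusion/projection partial copycats attached to the $\leqa$-ordering. Granting both, the previous discussion together with the result of \cite{mccusker2000gaf} immediately identifies $\mu X.\arena{A}$, which by Theorem~\ref{iso} is isomorphic to $\bigsqcup_n \arena{A}^n(\arena{I})$, as a minimal invariant of $\arena{A}[X]$.

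For closedness, I would exhibit an innocent strategy
$\mathrm{ap}_\arena{A} : (B \tto C) \tto (\arena{A}(B) \tto \arena{A}(C))$
defined as a generalised copycat. When Opponent plays a move of $\arena{A}(C)$ lying in the skeleton of $\arena{A}$, Player replies with the same move in $\arena{A}(B)$, exactly as the identity strategy on $\arena{A}[X]$ would. When Opponent plays an initial move of a $C$-component sitting under a hole $\hhole_X$, Player enters the argument position $(B \tto C)$ through the matching initial $C$-move; all subsequent moves in this hole instance are then propagated by pure copycat between the argument side and the image side, matching $B$ with $B$ and $C$ with $C$. Formally, the plays of $\mathrm{ap}_\arena{A}$ are the even-length legal sequences $s$ such that $\xi^*(s) \in id_{\arena{A}[X]}$ and, restricted to each opened hole instance, $s$ is a copycat play across the three sides. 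A direct unfolding shows that for any $\sigma : B \to C$ the composite of $\mathrm{ap}_\arena{A}$ with the curried form of $\sigma$ is exactly $\arena{A}(\sigma)$. The contravariant holes $\hhole^O_X$ are treated symmetrically by swapping the two sides of the argument.

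For preservation of inclusions and projections, let $\arena{B} \leqa \arena{C}$ and denote by $\iota, \pi$ the associated inclusion and projection partial copycats. Directly from the definition of the image of arenas, one has $\arena{A}(\arena{B}) \leqa \arena{A}(\arena{C})$. Unpacking the definition of $\arena{A}(\iota)$ one sees that its plays are exactly the even-length copycat plays on $\arena{A}(\arena{B}) \tto \arena{A}(\arena{C})$ restricted to the $\arena{A}(\arena{B})$-part, that is, the inclusion on $\arena{A}(\arena{B}) \leqa \arena{A}(\arena{C})$; the projection case is dual, the key point being that the condition $\xi^*(s) \in id_{\arena{A}[X]}$ together with copycat on each hole forces Player to be stuck precisely when Opponent plays outside the $\arena{B}$-subarenas. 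The main technical obstacle throughout is the innocence of $\mathrm{ap}_\arena{A}$: following the same pattern as in the proof sketch of the preceding proposition, each P-view factors as an initial copycat segment on the skeleton followed by at most one copycat segment on a single opened hole instance, so the P-view uniquely determines whether the internal identity or the argument strategy is responsible for Player's next move, which is then forced by copycat.
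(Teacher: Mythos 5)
The paper states this proposition without any proof at all, so there is nothing of its own to compare you against; your proposal supplies the argument the author presumably had in mind, and it is correct in outline. Verifying McCusker's two hypotheses --- closedness via an internalising generalised copycat $\mathrm{ap}_{\arena{A}} : (B \tto C) \tto (\arena{A}(B)\tto\arena{A}(C))$, and preservation of the $\leqa$ partial copycats by direct inspection of the image-of-arenas construction --- and then transporting minimal invariance from $\bigsqcup_{n}\arena{A}^n(\arena{I})$ to $\mu X.\arena{A}$ through Theorem~\ref{iso} is exactly the intended route. The one step I would ask you to spell out is the well-formedness of $\mathrm{ap}_{\arena{A}}$ as a set of legal justified sequences: when Opponent opens a hole instance deep inside $\arena{A}(C)$ and you answer with the matching initial move of the argument component $B\tto C$, that answer must be justified by an initial move of the whole arena, and visibility demands that this justifier appear in the current P-view. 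This does hold --- the skeleton portion of the play is copycat, and P-views of copycat plays retain the entire enabling chain down to the initial move --- but it is the only point where the construction could silently fail, and it deserves a sentence. With that caveat, your argument is sound and fills a gap the paper leaves open.
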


This interpretation of recursive types as loops
preserves finiteness of the arena, and as we shall see, allows to easily express the winning conditions necessary
to model induction and coinduction.

\label{loops}
\subsection{Winning and Totality}

A \textbf{total strategy} on $\arena{A}$ is a strategy $\sigma:\arena{A}$ such that for all $s\in \sigma$, if there
is $a$ such that $sa\in \mathcal{L}_\arena{A}$, then there is $b$ such that $sab\in \sigma$. In other words, 
$\sigma$ has a response to any legal Opponent move. This is crucial to interpret logic because the interpretation
of proofs in game semantics always gives total strategies: this is a counterpart in semantics to the cut
elimination property in syntax. To model induction and coinduction in logic, we must therefore restrict to total 
strategies. However, it is well-known that the class of total strategies is not closed under composition, because 
an infinite chattering can occur in the hidden part of the interaction. This is analogous to the fact that in 
$\lambda$-calculus, the class of strongly normalizing terms is not closed under application: $\delta = \lambda x.x x$ 
is a normal form, however $\delta \delta$ is certainly not normalizable. This problem is discussed in \cite{abramsky1996sii,hyland1997gs}
and more recently in \cite{HCtotality}. We take here the solution of \cite{hyland1997gs}, and equip arenas with winning functions: 
for every infinite play we choose a loser, hence restricting to winning strategies has the effect of blocking infinite chattering.

The definition of legal plays extends smoothly to infinite plays. Let $\mathcal{L}_\arena{A}^\omega$ denote the set of infinite
legal plays over $\arena{A}$. If $\overline{s}\in \mathcal{L}_\arena{A}^\omega$, we say that $\overline{s}\in \sigma$ when
for all $s\sqsubset \overline{s}$, $s\in \sigma$. We write 
$\overline{\mathcal{L}_\arena{A}} = \mathcal{L}_\arena{A} + \mathcal{L}_\arena{A}^\omega$. A \textbf{game} will be a pair
$\game{A} = (\arena{A},\mathcal{G}_\arena{A})$ where $\arena{A}$ is an arena, and $\mathcal{G}_\arena{A}$ is a 
function from infinite \textbf{threads} on $\arena{A}$ (\emph{i.e.} infinite legal plays with exactly one initial move) 
to $\{W,L\}$. The winning function $\mathcal{G}_\arena{A}$ extends naturally to potentially finite threads by setting, 
for each finite $s$:
\[
\mathcal{G}_\arena{A}(s) = \left\{\begin{array}{ll}W&\textrm{if $|s|$ is even ;}\\
						     L&\textrm{otherwise.}\end{array}\right.
\]
Finally, $\mathcal{G}_\arena{A}$ extends to legal plays by saying that $\mathcal{G}_\arena{A}(s)=W$ iff 
$\mathcal{G}_\arena{A}(t)=W$ for every thread $t$ of $s$. By abuse of notation, we keep the same notation
for this extended function. 
The constructions on arenas presented in section \ref{arenasplays} extend to constructions on games as follows:
\begin{itemize}
\item $\mathcal{G}_{\arena{A}\times \arena{B}}(s) = [\mathcal{G}_\arena{A},\mathcal{G}_\arena{B}]$ (indeed, a thread
on $\arena{A}\times \arena{B}$ is either a thread on $\arena{A}$ or a thread on $\arena{B}$) ;
\item $\mathcal{G}_{\arena{A}+\arena{B}}(s) = W$ iff all threads of $s_{\restrict_\arena{A}}$ are winning for 
$\mathcal{G}_\arena{A}$ and all threads of $s_{\restrict_\arena{B}}$ are winning for $\mathcal{G}_\arena{B}$.
\item $\mathcal{G}_{\arena{A}\Rightarrow \arena{B}}(s)=W$ iff if all threads of $s_{\restrict_\arena{A}}$ are winning
for $\mathcal{G}_\arena{A}$, then $\mathcal{G}_\arena{B}(s_{\restrict_\arena{B}})=W$.
\end{itemize}

It is straightforward to check that these constructions commute with the extension of winning functions from infinite 
threads to potentially infinite legal plays. 
We now define \textbf{winning strategies} $\sigma:\game{A}$ as innocent
strategies $\sigma:\arena{A}$ such that for all $s\in \sigma$, $\mathcal{G}_\arena{A}(s)=W$. Now, the following
proposition is satisfied:

\begin{prop}Let $\sigma:\game{A}\Rightarrow \game{B}$ and $\tau:\game{B}\Rightarrow \game{C}$ be two
total winning strategies. Then $\sigma;\tau$ is total winning.
\end{prop}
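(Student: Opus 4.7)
The plan is to decompose the claim into its two components, totality and winning, and handle them in that order: totality is the delicate obligation that motivated the introduction of winning functions in the first place, while the winning part then follows by a direct implication chase.

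For totality, fix $s \in \sigma;\tau$ and a legal Opponent move $a$ with $sa \in \mathcal{L}_{\arena{A}\Rightarrow \arena{C}}$, and lift $s$ to an interaction $u \in \sigma \parallel \tau$. The move $a$ sits in exactly one of $\arena{A}$ or $\arena{C}$, and in each case the extension $ua$ remains a valid interaction, presenting itself as a legal Opponent move either for $\sigma$ on the $\arena{A}\Rightarrow \arena{B}$ projection (if $a$ is in $\arena{A}$) or for $\tau$ on the $\arena{B}\Rightarrow \arena{C}$ projection (if $a$ is in $\arena{C}$). Applying the totality of the relevant strategy yields a Player response $m$. If $m$ lies in $\arena{A}$ or $\arena{C}$ we are done with $b = m$; if $m$ lies in $\arena{B}$, then the polarity flip of $\arena{B}$ between $\arena{A}\Rightarrow \arena{B}$ and $\arena{B}\Rightarrow \arena{C}$ turns $m$ into a legal Opponent move for the other strategy, whose totality again produces a Player response, and we iterate. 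Either this process eventually emits a visible move in $\arena{A}$ or $\arena{C}$ (the desired $b$), or it generates an infinite sequence of moves inside $\arena{B}$.

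The main obstacle is to rule out this infinite chattering. If it does occur, write $\bar u$ for the resulting infinite interaction; by the convention that an infinite play belongs to a strategy as soon as all its finite prefixes do, $\bar u_{\restrict_{\arena{A},\arena{B}}} \in \sigma$ and $\bar u_{\restrict_{\arena{B},\arena{C}}} \in \tau$, and both satisfy their respective winning conditions. The chattering tail contributes no moves in $\arena{A}$ or $\arena{C}$, so the $\arena{A}$- and $\arena{C}$-threads of $\bar u$ are finite and coincide with those of $s$ (plus at most the one move $a$); their winning values are pinned down in advance. On the other hand, the chattering contributes a single infinite thread $\bar t$ in $\arena{B}$, with a well-defined $\mathcal{G}_{\arena{B}}(\bar t) \in \{W,L\}$. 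A case split on $\mathcal{G}_{\arena{B}}(\bar t)$ then collides with one of the two winning conditions: in one case the antecedent of $\sigma$-winning's implication must fail, pinning down some losing $\arena{A}$-thread; in the other the consequent of $\tau$-winning's implication must hold, forcing every $\arena{C}$-thread to be winning. The frozen parities of the $\arena{A}$- and $\arena{C}$-threads of $s$ cannot simultaneously meet both constraints, yielding the desired contradiction; carrying out this combinatorial clash rigorously is the technical crux of the proof.

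With totality in hand, winning is a direct diagram chase. Given any $s \in \sigma;\tau$, lift it to $u \in \sigma \parallel \tau$ and assume every thread of $s_{\restrict_{\arena{A}}}$ is winning for $\mathcal{G}_{\arena{A}}$. The $\arena{B}$-restrictions of $u_{\restrict_{\arena{A},\arena{B}}}$ and $u_{\restrict_{\arena{B},\arena{C}}}$ coincide as sequences of $\arena{B}$-moves of $u$; $\sigma$-winning then yields that every thread of this common $\arena{B}$-restriction is winning for $\mathcal{G}_{\arena{B}}$, and $\tau$-winning gives that every thread of $s_{\restrict_{\arena{C}}}$ is winning for $\mathcal{G}_{\arena{C}}$. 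This is exactly $\mathcal{G}_{\arena{A}\Rightarrow \arena{C}}(s) = W$, and the same argument extends to infinite $s$ via the threadwise extension of the winning functions.
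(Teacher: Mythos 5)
Your overall strategy coincides with the paper's: split the claim into totality and winning, rule out infinite chattering in $\arena{B}$ by playing the two winning conditions against each other, and obtain the winning of the composite by chasing the implication through $\arena{B}$. The winning half of your argument is complete and is essentially the paper's. The problem is in the totality half, where you stop exactly at the point where the proof actually happens. You correctly reduce to an infinite interaction $\bar u$ whose $\arena{A}$- and $\arena{C}$-components are finite, and you correctly set up the case split on $\mathcal{G}_{\arena{B}}$ of the chattering thread, but you never compute the winning values of $\bar u_{\restrict_{\arena{A}}}$ and $\bar u_{\restrict_{\arena{C}}}$ --- you only assert that they are ``pinned down'' and that the clash is ``the technical crux''. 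The paper closes this gap with the switching condition for interactions: once the infinite $\arena{B}$-tail starts, no move of $\arena{A}$ or $\arena{C}$ ever occurs again, and the switching discipline forces $|\bar u_{\restrict_{\arena{A}}}|$ to be even and $|\bar u_{\restrict_{\arena{C}}}|$ to be odd; by the convention extending winning functions to finite threads this gives $\mathcal{G}_{\arena{A}}(\bar u_{\restrict_{\arena{A}}})=W$ and $\mathcal{G}_{\arena{C}}(\bar u_{\restrict_{\arena{C}}})=L$. With these two values in hand, \emph{each} branch of the case split is contradictory: if $\mathcal{G}_{\arena{B}}(\bar u_{\restrict_{\arena{B}}})=W$ then $\tau$ loses on $\arena{B}\tto\arena{C}$ (winning antecedent, losing consequent), and if it is $L$ then $\sigma$ loses on $\arena{A}\tto\arena{B}$.

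A secondary but real issue is the logical shape of the contradiction you aim for. You say the frozen parities ``cannot simultaneously meet both constraints''. That is the wrong target: since you do not know the value of $\mathcal{G}_{\arena{B}}(\bar t)$, you must refute \emph{whichever} branch occurs, hence you need both that every $\arena{A}$-thread is winning (killing the branch that demands a losing $\arena{A}$-thread) \emph{and} that some $\arena{C}$-thread is losing (killing the branch that forces all $\arena{C}$-threads to win). Showing merely that the two constraints are not jointly satisfiable refutes neither branch on its own. So what is missing is not a routine verification but the one fact that makes the proposition true: the switching analysis that determines the parities, and with it the simultaneous failure of both constraints.
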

\begin{proof}[Proof sketch.]
If $\sigma;\tau$ is not total, there must be infinite $s$ in their parallel interaction $\sigma||\tau$, such that
$s_{\restrict_{\arena{A},\arena{C}}}$ is finite. By switching, we have in fact $|s_{\restrict_\arena{A}}|$ even
and $|s_{\restrict_\arena{C}}|$ odd. Thus $\mathcal{G}_\arena{A}(s_{\restrict_\arena{A}})=W$ and 
$\mathcal{G}_\arena{C}(s_{\restrict_\arena{C}})=L$. We reason then by disjunction of cases. Either 
$\mathcal{G}_\arena{B}(s_{\restrict_\arena{B}})=W$ in which case 
$\mathcal{G}_{\arena{B}\Rightarrow \arena{C}}(s_{\restrict_{\arena{B},\arena{C}}})=L$ and $\tau$ cannot
be winning, or $\mathcal{G}_\arena{B}(s_{\restrict_\arena{B}})=L$ in which case 
$\mathcal{G}_{\arena{A}\Rightarrow \arena{B}}(s_{\restrict_{\arena{A},\arena{B}}})=L$ and $\sigma$ cannot
be winning. Therefore $\sigma;\tau$ is total.

$\sigma;\tau$ must be winning as well. Suppose there is $s\in \sigma;\tau$ such that 
$\mathcal{G}_{\arena{A}\Rightarrow \arena{C}}(s)=L$. By definition of $\mathcal{G}_{\arena{A}\Rightarrow\arena{C}}$,
this means that $\mathcal{G}_\arena{A}(s_{\restrict_\arena{A}})=W$ and $\mathcal{G}_\arena{C}(s_{\restrict_\arena{C}})=L$.
By definition of composition, there is $u\in \sigma||\tau$ such that $s=u_{\restrict_{\arena{A},\arena{C}}}$. 
But whatever the value of $\mathcal{G}_\arena{B}(u_{\restrict_\arena{B}})$ is, one of $\sigma$ or $\tau$ is
losing. Therefore $\sigma;\tau$ is winning.
\end{proof}

It is clear from the definitions that all plays in the identity are winning. It is also clear that all the structural 
morphisms of the cartesian closed structure of \Inn~are winning (they are essentially copycat strategies), thus
this defines a cartesian closed category \Games~of games and innocent total winning strategies. 

\section{Fixpoints}
\subsection{$\mu LJ$: an intuitionistic sequent calculus with fixpoints}
\subsubsection{Formulas.}$S~::= ~S\tto T~ |~ S\vee T~ |~ S\wedge T ~|~ \mu X.T ~|~ \nu X.T ~|~ X ~|~ \top ~|~ \bot$\\
A formula $F$ is \textbf{valid} if for any subformula of $F$ of the form $\mu X.F'$, 
\begin{itemize}
\item[\emph{(1)}] $X$ appears only positively in $F'$,
\item[\emph{(2)}] $X$ does not appear at the root of $F'$ (\emph{i.e.} $X$ appears at least under a $\vee$ or a $\tto$ in the abstract
syntax tree of $F'$).
\end{itemize}
\emph{(2)} corresponds to the restriction to arenas where loops allow to express recursive types, whereas \emph{(1)} is the usual positivity condition.
We could of course hack the definition to get rid of these restrictions, but we choose not to obfuscate the treatment for an extra 
generality which is neither often considered in the literature, nor useful in practical examples of (co)induction.

\subsubsection{Derivation rules.} We present the rules with the usual dichotomy.
\boxit[Identity group]{
\[
\prooftree
\justifies
A\vdash A
\using ax
\endprooftree
~~~~~~~~~~~~~
\prooftree
\Gamma\vdash A~~~~~\Delta,A\vdash B
\justifies
\Gamma,\Delta\vdash B
\using Cut
\endprooftree
\]
}
\boxit[Structural group]{
\[
\prooftree
\Gamma,A,A\vdash B
\justifies
\Gamma,A\vdash B
\using C
\endprooftree
~~~~~~~~~~~~~
\prooftree
\Gamma \vdash B
\justifies
\Gamma,A\vdash B
\using W
\endprooftree
~~~~~~~~~~~~~
\prooftree
\Gamma,A,B,\Delta\vdash C
\justifies
\Gamma,B,A,\Delta\vdash C
\using \gamma
\endprooftree
\]
}
\boxit[Logical group]{
\[
\prooftree
\Gamma,A\vdash B
\justifies 
\Gamma \vdash A\tto B
\using \tto_r
\endprooftree
~~~~~~~~~~
\prooftree
\Gamma\vdash A~~~~~\Delta,B\vdash C
\justifies 
\Gamma,\Delta, A\tto B \vdash C
\using \tto_l
\endprooftree
~~~~~~~~~~
\prooftree
\justifies
\Gamma,\bot\vdash A
\using \bot_l
\endprooftree
~~~~~~~~~~
\prooftree
\justifies
\Gamma \vdash \top
\using \top_r
\endprooftree
\]\vspace{5mm}
\[
\prooftree
\Gamma\vdash A ~~~~ \Gamma\vdash B
\justifies 
\Gamma \vdash A\wedge B
\using \wedge_r
\endprooftree
~~~~~~~~~~~~
\prooftree
\Gamma,A\vdash C
\justifies 
\Gamma,A\wedge B\vdash C
\using \overleftarrow{\wedge_l}
\endprooftree
~~~~~~~~~~~~
\prooftree
\Gamma,B\vdash C
\justifies 
\Gamma,A\wedge B \vdash C
\using \overrightarrow{\wedge_l}
\endprooftree
\]\vspace{5mm}
\[
\prooftree
\Gamma \vdash A
\justifies 
\Gamma \vdash A \vee B
\using \overleftarrow{\vee_r}
\endprooftree
~~~~~~~~~~~~
\prooftree
\Gamma \vdash B
\justifies 
\Gamma \vdash A \vee B
\using \overrightarrow{\vee_r}
\endprooftree
~~~~~~~~~~~~
\prooftree
\Gamma,A \vdash C~~~~~~\Delta,B\vdash C
\justifies
\Gamma,\Delta,A\vee B\vdash C
\using \vee_l
\endprooftree
\]
}
\boxit[Fixpoints]{
\[
\prooftree
\Gamma\vdash T[\mu X.T/X]
\justifies 
\Gamma \vdash \mu X.T
\using \mu_r
\endprooftree
~~~~~
\prooftree
T[A/X]\vdash A
\justifies 
\mu X.T \vdash A
\using \mu_l
\endprooftree
~~~~~
\prooftree
T[\nu X.T/X] \vdash B
\justifies 
\nu X.T \vdash B
\using \nu_l
\endprooftree
~~~~~
\prooftree
A\vdash T[A/X]
\justifies 
A\vdash \nu X.T
\using \nu_r
\endprooftree
\]
}

Note that the $\mu_l$, $\nu_l$ and $\nu_r$ rules are not relative to any context. In fact, the general rules
with a context $\Gamma$ at the left of the sequent are derivable from these ones (even if, for $\mu_l$ and $\nu_r$,
the construction of the derivation requires an induction on $T$), and we stick with the present ones to clarify the game model.
Cut elimination on the $\tto,\wedge,\vee$ fragment is the same as usual. For the reduction of $\mu$ and $\nu$, we 
need an additional rule to handle the unfolding of formulas.
For this purpose, we add a new rule $[T]$ for each type $T$ with free variables. This method can already be found in 
\cite{abel2000psn} for strictly positive functors: no type variable appears on the left of an implication.
From now on, $T[A/X]$ will be abbreviated $T(A)$. This notation implies that, unless otherwise stated, $X$ will
be the variable name for which $T$ is viewed as a functor. In the following rules, $X$ appears only positively in $T$ 
and only negatively in $N$:

\boxit[Functors]{
\[
\prooftree
A\vdash B
\justifies
T(A)\vdash T(B)
\using [T]
\endprooftree
~~~~~~~~~~~~~
\prooftree
A\vdash B
\justifies
N(B)\vdash N(A)
\using [N]
\endprooftree
\]
}
The dynamic behaviour of this rule is to locally perform the unfolding. We give some of the reduction
rules. These are of two kinds: the rules for the elimination of $[T]$, and the cut elimination rules. Here are the
main cases:

{
\small{}
\[
\prooftree
        \prooftree
                \pi
                \justifies
                A\vdash B
        \endprooftree
        \justifies
        T \vdash T
        \using [T] (X\not\in FV(T))
\endprooftree
\leadsto
\prooftree
\justifies
T\vdash T
\using ax
\endprooftree
~~~~~~~~~~~~~~
\prooftree
        \prooftree
                \pi
                \justifies
                A\vdash B
        \endprooftree
        \justifies
        A\vdash B
        \using [X]
\endprooftree
\leadsto
\prooftree
        \pi
        \justifies
        A\vdash B
\endprooftree
\]\vspace{5mm}
\[
\prooftree
        \prooftree
                \pi
                \justifies
                A\vdash B
        \endprooftree
        \justifies 
        N(A)\tto T(A) \vdash N(B)\tto T(B)
        \using [N\tto T]
\endprooftree
\leadsto
\prooftree
        \prooftree
                \prooftree
                        \prooftree
                                \pi
                                \justifies
                                A\vdash B
                        \endprooftree
                        \justifies
                        N(B)\vdash N(A)
                        \using [N]
                \endprooftree
                ~~~~~
                \prooftree
                        \prooftree
                                \pi
                                \justifies
                                A\vdash B
                        \endprooftree
                        \justifies
                        T(A)\vdash T(B)
                        \using [T]
                \endprooftree
                \justifies
                N(A)\tto T(A),N(B)\vdash T(B)
                \using \tto_l
        \endprooftree
        \justifies
        N(A)\tto T(A)\vdash N(B)\tto T(B)
        \using \tto_r
\endprooftree
\]\vspace{5mm}
\[
\prooftree
        \prooftree
                \pi
                \justifies
                A\vdash B
        \endprooftree
        \justifies
        \mu Y.T(A) \vdash \mu Y.T(B)
        \using [\mu Y.T]
\endprooftree
\!\!\leadsto\!
\prooftree
        \prooftree
                \prooftree
                        \prooftree
                                \pi
                                \justifies
                                A\vdash B
                        \endprooftree
                        \justifies
                        T(A)[\mu Y.T(B)/Y]\vdash T(B)[\mu Y.T(B)/Y]
                        \using [T[\mu Y.T(B)/Y]]
                \endprooftree
                \justifies
                T(A)[\mu Y.T(B)/Y]\vdash \mu Y.T(B)
                \using \mu_r
        \endprooftree
        \justifies
        \mu Y.T(A)\vdash \mu Y.T(B)
        \using \mu_l
\endprooftree
\]
}

\normalsize{We omit the rule for $\nu$, which is dual, and for $\wedge$ and $\vee$, which are simple pairing and case manipulations. 
Note also that most of these cases have a counterpart where $T$ is replaced by negative $N$, which has the sole effect of
$\pi$ being a proof of $B\vdash A$ instead of $A\vdash B$ in the expansion rules. With that, we can express
the cut elimination rule for fixpoints:}

{\small{
\[
\hspace{-35pt}
\prooftree
        \prooftree
                \prooftree
                        \pi_1
                        \justifies
                        \Gamma \vdash T[\mu X.T/X]
                \endprooftree
                \justifies
                \Gamma\vdash \mu X.T
                \using \mu_r
        \endprooftree
        \prooftree
                \prooftree
                        \pi_2
                        \justifies
                        T[A/X]\vdash A
                \endprooftree
                \justifies
                \mu X. T \vdash A
                \using \mu_l
        \endprooftree
\justifies 
\Gamma\vdash A
\using Cut
\endprooftree
~~~~
\leadsto
\]
\[
\prooftree
        \prooftree
                \prooftree
                        \pi_1
                        \justifies
                        \Gamma\vdash T[\mu X.T/X]
                \endprooftree
                \prooftree
                        \prooftree
                                \prooftree
                                        \pi_2
                                        \justifies
                                        T[A/X]\vdash A
                                \endprooftree
                                \justifies
                                \mu X.T \vdash A
                                \using \mu_l
                        \endprooftree
                        \justifies
                        T[\mu X.T/X]\vdash T[A/X]
                        \using [T]
                \endprooftree
        \justifies 
        \Gamma \vdash T[A/X]
        \using Cut
        \endprooftree
        \prooftree
                \pi_2
                \justifies
                T[A/X]\vdash A
        \endprooftree
        \justifies
        \Gamma\vdash A
        \using Cut
\endprooftree
\]}}

\normalsize{}We skip once again the rule for $\nu$, which is dual to $\mu$. We choose consciously not to recall the usual cut 
elimination rules nor the associated commutation
rules, since they are not central to our goals. $\mu LJ$, as presented above, does not formally eliminate cuts since there is 
no rule to reduce the following (and its dual with $\nu$):
\[
\prooftree
	\prooftree
		\prooftree
			\pi_1
			\justifies
			T(A)\vdash A
		\endprooftree
		\justifies
		\mu X.T\vdash A
		\using \mu_l
	\endprooftree
	~~~~~
	\prooftree
		\pi_2
		\justifies
		\Gamma,A\vdash B
	\endprooftree
	\justifies
	\Gamma,\mu X.T\vdash B
	\using Cut
\endprooftree
\]
This cannot be reduced without some prior unfolding of the $\mu X.T$ on the left. This issue is often solved \cite{baelde2007lag}
by replacing the rule for $\mu$ presented here above by the following:
\[
\prooftree
	T(A)\vdash A~~~~~\Gamma,A\vdash B
	\justifies
	\Gamma,\mu X.T \vdash B
	\using \mu'	
\endprooftree
\]
With the corresponding reduction rule, and analogously for $\nu$. We choose here not to do this, first because our game
model will prove consistency without the need to prove cut elimination, and second because we want to preserve the proximity with the categorical
structure of initial algebras / terminal coalgebras. 

\subsection{The games model}

We present the game model for fixpoints. We wish to model a proof system, therefore we need
our strategies to be total. The base arenas of the interpretation of fixpoints will be 
the arenas with loops presented in section \ref{loops}, to which we will adjoin a winning
function. While the base arenas will be the same for greatest and least fixpoints, 
they will be distinguished by the winning function: intuitively, Player loses if 
a play grows infinite in a least fixpoint (inductive) game, and Opponent loses if this
happens in a greatest fixpoint (coinductive) game. The winning functions we are going
to present are strongly influenced by Santocanale's work on games for 
$\mu$-lattices \cite{santocanale2002fmul}.
A \textbf{win open functor} is a functor $\game{T}:(\Games\times \Games^{op})^n\to \Games$ such that there is an open functor 
$\arena{T}[X_1,\dots,X_n]$
such that for all games $\game{A}_1,\dots,\game{A}_{2n}$ of base arenas $\arena{A}_1,\dots,\arena{A}_{2n}$, the base arena of 
$\game{T}(\game{A}_1,\dots,\game{A}_{2n})$ is $\arena{T}(\arena{A}_1,\dots,\arena{A}_n)$. In other terms, it is the natural 
lifting of open functors to the category of games. By abuse of notation, we denote this by $\game{T}[X_1,\dots,X_n]$, and 
$\arena{T}[X_1,\dots,X_n]$ will denote its underlying open functor.

\subsubsection{Least fixed point.} Let $\game{T}[X_1,\dots,X_n]$ be a win open functor such that $\hhole_{X_1}$ appears only positively
and at depth higher than $0$ in $\arena{T}[X_1,\dots,X_n]$. Then we define a new win open functor $\mu X_1.\game{T}[X_2,\dots,X_n]$
as follows:
\begin{itemize}
\item Its base arena is $\mu X_1. \arena{T}[X_2,\dots,X_n]$ ;
\item If $\game{A}_3,\dots,\game{A}_{2n}\in \Games$, $\mathcal{G}_{\mu X_1.\game{T}(\game{A}_3,\dots,\game{A}_{2n})}(s)=W$ iff
	\begin{itemize}
		\item There is $N\in \mathbb{N}$ such that no path of $s$ takes the external loop more that $N$ times, and ;
		\item $s$ is winning in the subgame inside the loop, or more formally:\\
		$\mathcal{G}_{\game{T}(\game{I},\game{I},\game{A}_3,\dots,\game{A}_{2n})}(s_{\restrict_{\game{T}(\game{I},\game{I},\game{A}_3,\dots,\game{A}_{2n})}})=W$.
	\end{itemize}
\end{itemize}

\subsubsection{Greatest fixed point.} Dually, if the same conditions are satisfied, we define the win open functor 
$\nu X_1.\game{T}[X_1,\dots,X_n]$ as follows:
\begin{itemize}
\item Its base arena is $\mu X_1. \arena{T}[X_2,\dots,X_n]$ ;
\item If $\game{A}_3,\dots,\game{A}_{2n}\in \Games$, $\mathcal{G}_{\nu X_1.\game{T}(\game{A}_3,\dots,\game{A}_{2n})}(s)=W$ iff
        \begin{itemize}
                \item For any $N\in \mathbb{N}$, there is a path of $s$ crossing the external loop more than $N$ times, or ;
                \item $s$ is winning in the subgame inside the loop, or more formally:\\
		$\mathcal{G}_{\game{T}(\game{I},\game{I},\game{A}_3,\dots,\game{A}_{2n})}(s_{\restrict_{\game{T}(\game{I},\game{I},\game{A}_3,\dots,\game{A}_{2n})}})=W$.
        \end{itemize}
\end{itemize}

It is straightforward to check that these are still functors, and in particular win open functors. There is one particular case 
that is worth noticing: if $\game{T}[X]$ has only one hole which appears only in positive position and at depth greater than $0$, then 
$\mu X.\game{T}$ is a constant functor, \emph{i.e.} a game. Moreover, theorem \ref{iso} implies that it is isomorphic in \Inn~
to $\game{T}(\mu X.\game{T})$. It is straightforward to check that this isomorphism $i_\game{T}: \game{T}(\mu X.\game{T}) \to \mu X.\game{T}$ 
is winning (it is nothing but the identity strategy), which shows that they are in fact isomorphic in \Games. Then, one can prove the following theorem:

\begin{thm}If $\game{T}[X]$ has only one hole which appears only in positive position and at depth greater than $0$, then 
the pair $(\mu X.\game{T},i_\game{T})$ defines an \textbf{initial algebra} for $\game{T}[X]$ and $(\nu X.\game{T},i^{-1}_\game{T})$
defines a \textbf{terminal coalgebra} for $\game{T}[X]$.
\end{thm}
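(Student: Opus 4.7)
The plan is to leverage the minimal invariance of $\mu X.\arena{T}$ in \Inn, established by the preceding proposition, together with the winning conditions, in order to promote the classical Freyd construction to \Games. Given a winning algebra $\alpha:\game{T}(\game{A})\to \game{A}$, I first describe the candidate morphism $f:\mu X.\game{T}\to \game{A}$ on the innocent level. Each P-view in $\mu X.\game{T}\tto \game{A}$ crosses the external loop only finitely many times, so it lifts canonically to a P-view in $\game{T}^n(\game{A})\tto \game{A}$ for some $n$ (the lift being well-defined because the loop of $\mu X.\arena{T}$ only identifies certain enablers of $\arena{T}$); $f$ then responds as prescribed by $\alpha\circ\game{T}(\alpha)\circ\cdots\circ\game{T}^{n-1}(\alpha)$ acting in this iterate. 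The resulting $f$ is innocent because compatible P-views lift consistently, satisfies the algebra equation by construction, and is the unique algebra morphism in \Inn~by Freyd's argument applied to the minimal invariant $\mu X.\arena{T}$.

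The work specific to \Games~is to check that $f$ is total and winning. Totality is immediate: any P-view involves finitely many loop traversals, so a finite iterate of $\alpha$ provides the required response. For winning, take an infinite $s\in f$, viewed as a play in $\mu X.\game{T}\tto \game{A}$. If some path of $s_{\restrict_{\mu X.\game{T}}}$ crosses the external loop infinitely often, then no bound $N$ exists in the first clause of the $\mu$-winning condition, so $\mathcal{G}_{\mu X.\game{T}}(s_{\restrict_{\mu X.\game{T}}})=L$ and the implication defining $\mathcal{G}_{\mu X.\game{T}\tto \game{A}}$ is vacuously satisfied. Otherwise a uniform bound $N$ dominates all loop traversals in $s$, the play factors through $\game{T}^N(\game{A})\tto \game{A}$, and $f$ there behaves as $\alpha\circ\game{T}(\alpha)\circ\cdots\circ\game{T}^{N-1}(\alpha)$; winning on this restricted play follows by induction on $N$ from the winning of $\alpha$ and the fact that $\game{T}$, being a win open functor, preserves winning strategies.

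Uniqueness transfers from \Inn: any winning algebra morphism is in particular an algebra morphism in \Inn, hence coincides with $f$ by the minimal invariant argument. The terminal coalgebra statement is dual and needs no essentially new idea. Given a winning coalgebra $\beta:\game{A}\to \game{T}(\game{A})$, the same iterative recipe on P-views defines a unique innocent strategy $\game{A}\to \nu X.\game{T}$; winning now exploits the $\nu$-condition symmetrically, since the case of infinitely many loop traversals now witnesses the \emph{first} clause of $\mathcal{G}_{\nu X.\game{T}}$ and declares Player the winner on the target side, while the bounded case is handled exactly as above. The main obstacle is making the unrolling of P-views from $\mu X.\game{T}\tto \game{A}$ into $\game{T}^n(\game{A})\tto \game{A}$ fully precise — the pointer structure and compatibility across different choices of $n$ require careful bookkeeping — after which the winning arguments are essentially a case analysis on the winning functions of the previous section.
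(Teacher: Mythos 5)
Your proposal is correct and follows essentially the same route as the paper: the mediating morphism is obtained by iterating the algebra map $\alpha$ (your P-view-level lifting into $\game{T}^n(\game{A})\tto\game{A}$ is the same strategy the paper defines as the union of the iterates $\sigma^{(n)}$), winning is established by the same case analysis on whether the external loop is traversed boundedly often, and uniqueness comes from iterating the algebra equation over the minimal invariant, which is exactly the paper's direct argument. The only differences are presentational, so nothing further is needed.
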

\begin{proof}We give the proof for initial alebras, the second part being dual.
Let $(\game{A},\sigma)$ another algebra of $\game{T}[X]$. We need to show that there is a
unique $\sigma^\dagger:\mu X. \game{T} \tto \game{B}$ such that
\[
\xymatrix@R=15pt@C=15pt{
\game{T}(\mu X.\game{T})\ar[r]^{\game{T}(\sigma^\dagger)}\ar[d]_{i_\game{T}}&\game{T}(\game{B})\ar[d]^{\sigma}\\
\mu X.\game{T}\ar[r]_{\sigma^\dagger}&\game{B}
}
\]
commutes. The idea is to iterate $\sigma$:
\[
\xymatrix{
\dots\ar[r]^{\game{T}^3(\sigma)}&\game{T}^3(\game{B})\ar[r]^{\game{T}^2(\sigma)}&\game{T}^2(\game{B})\ar[r]^{\game{T}(\sigma)}&\game{T}(\game{B})\ar[r]^\sigma&\game{B}
}
\]
and somehow to take the limit. In fact we can give a direct definition of $\sigma^\dagger$:
\begin{eqnarray*}
\sigma^{(1)} &=& \sigma \\
\sigma^{(n+1)} &=& \game{T}^n(\sigma);\sigma^{(n)}\\
\sigma^\dagger &=& \{s\in \mathcal{L}_{\mu X.\game{T} \tto \game{B}}~|~\exists n\in \mathbb{N}^*,~ s\in \sigma^{(n)}\}
\end{eqnarray*}
This defines an innocent strategy, since when restricted to plays of $\mu X.\game{T}$, these strategies agree on their common domain.
This strategy is winning. Indeed, take an infinite play $\overline{s}\in \sigma^\dagger$. Suppose $\overline{s}_{\restrict_{\mu X. \game{T}}}$
is winning. By definition of $\mathcal{G}_{\mu X.\game{T}}$, this means that there is $N\in \mathbb{N}$ such that no path of 
$\overline{s}_{\restrict_{\mu X. \game{T}}}$ takes the external loop more than $N$ times. Thus, 
$\overline{s}\in \overline{L_{\game{T}^n(\game{I})\tto \game{B}}}$. But this implies that $\overline{s}\in \sigma^{(n)}$, and $\sigma^{(n)}$
is a composition of winning strategies thus winning, therefore $\overline{s}$ is winning.
Moreover, $\sigma^\dagger$ is the unique innocent strategy making the diagram commute: suppose there is another $f$ making this square commute. 
Since $\game{T}(\mu X.\game{T})$ and $\mu X.\game{T}$ have
the same set of paths, $i_\game{T}$ is in fact the identity, thus we have $\game{T}(f);\sigma = f$.
By applying $T$ and post-composing by $\sigma$, we get:
\[
\game{T}^2(f);\game{T}(\sigma);\sigma = \game{T}(f);\sigma = f
\]
And by iterating this process, we get for all $n\in \mathbb{N}$:
\[
\game{T}^{n+1}(f);\game{T}^n(\sigma);\dots;\game{T}(\sigma);\sigma = f
\]
Thus:
\[
\game{T}^{n+1}(f);\sigma^{(n)} = f
\]
Now take $s\in f$, and let $n$ be the length of the longest path in $s$. Since $\game{T}[X]$ has no hole at the root,
no path of length $n$ can reach $B$ in $\game{T}^{n+1}(B)$, thus $s\in \sigma^{(n)}$, therefore
$s\in \sigma^\dagger$. The same reasoning also works for the other inclusion.
Likewise, if $\sigma: \game{B}\to \game{T}(\game{B})$, we build a unique $\sigma^\ddagger: \game{B}\to \nu X.\game{T}$ making
the coalgebra diagram commute.
\end{proof}

\subsection{Interpretation of $\mu LJ$}

\subsubsection{Interpretation of Formulas.} As expected, we give the interpretation of valid formulas.
\[
\begin{array}{rclcrcl}
\intr{\top} &=& \game{I}			&~~~~~~	&\intr{A\tto B} &=& \intr{A}\tto \intr{B}\\
\intr{\bot} &=& \game{\bot}			&~~~~~~	&\intr{X} &=& \hhole_X\\
\intr{A\vee B} &=& \intr{A} + \intr{B}		&~~~~~~	&\intr{\mu X. T} &=& \mu X.\intr{T}\\
\intr{A\wedge B} &=& \intr{A} \times \intr{B}	&~~~~~~	&\intr{\nu X. T} &=& \nu X. \intr{T}
\end{array}
\]

\subsubsection{Interpretation of Proofs.} As usual, the interpretation of a proof $\pi$ of a sequent $A_1,\dots,A_n\vdash B$ will 
be a morphism $\intr{\pi}: \intr{A_1}\times \dots\times \intr{A_n} \longrightarrow \intr{B}$. The interpretation is computed
by induction on the proof tree. The interpretation of the rules of LJ is standard and its correctness follows from the cartesian
closed structure of \Games. Here are the interpretations for the fixpoint and functor rules:

{
\small{
\[
\bigintr{
\prooftree
        \prooftree
                \pi
                \justifies
                \Gamma\vdash T[\mu X.T/X]
        \endprooftree
        \justifies
        \Gamma\vdash \mu X.T
        \using \mu_r
\endprooftree
} = \intr{\pi};i_{\intr{T}}
~~~~~~~~~~~~~~~~~
\bigintr{
\prooftree
        \prooftree
                \pi
                \justifies
                T[A/X]\vdash A
        \endprooftree
        \justifies
        \mu X.T\vdash A
        \using \mu_l
\endprooftree
} = \intr{\pi}^\dagger
\]
\[
\bigintr{
\prooftree
        \prooftree
                \pi
                \justifies
                T[\nu X.T/X]\vdash B
                \endprooftree
        \justifies
        \nu X.T\vdash B
        \using \nu_l
\endprooftree
} = i_{\intr{T}}^{-1};\intr{\pi}
~~~~~~~~~~~~~~
\bigintr{
\prooftree
        \prooftree
                \pi
                \justifies 
                A\vdash T[A/X]
                \endprooftree
        \justifies
        A\vdash \nu X.T
        \using \nu_r
\endprooftree
} = \intr{\pi}^\ddagger
\]
\[
\bigintr{
\prooftree
        \prooftree
                \pi
                \justifies
                A\vdash B
        \endprooftree
        \justifies T(A)\vdash T(B)
        \using [T]
\endprooftree
} = \intr{T}(\intr{\pi})
\]}
}

We do not give the details of the proof that this defines an invariant of reduction. The main technical point is the validity 
of the interpretation of the functor rule; more precisely when the functor is a (least or greatest) fixpoint.
Given that, we get the following theorem.
\begin{thm}
If $\pi \leadsto \pi'$, then $\intr{\pi} = \intr{\pi'}$.
\end{thm}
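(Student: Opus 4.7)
The plan is to proceed by induction on the reduction relation $\leadsto$, case-splitting on which reduction rule is applied at the root, then applying the induction hypothesis in the contextual closure cases. The standard $\tto,\wedge,\vee$ cut-elimination cases are handled uniformly by the fact that \Games~is cartesian closed, so these reductions are already validated by the categorical structure and do not require game-theoretic arguments. This leaves two families of genuinely new cases to verify: the elimination rules for the functor construction $[T]$, and the cut-elimination rules for $\mu$ and $\nu$.

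For the $[T]$-elimination rules, I would argue by induction on the structure of $T$ that $\intr{T}$ (as interpreted on proofs via the open-functor action) coincides with what the syntactic unfolding prescribes. The base cases $X \notin FV(T)$ and $T = X$ are immediate: in the first case $\intr{T}$ is a constant functor, hence maps every morphism to the identity, matching the reduct which is just $ax$; in the second case $\intr{X}$ is the identity functor on its coordinate, matching the reduct $\pi$ itself. The inductive cases $N \tto T$, $N \wedge T$, $N \vee T$ reduce to checking that the open-functor action commutes with the CCC / coproduct structure, which is essentially the statement that open functors are defined pointwise on each type constructor. The case $[\mu Y.T]$ requires the stronger claim that $\mu Y.\intr{T}$, as a functor on the remaining parameter, coincides with the action obtained by applying $\mu_l$ followed by $\mu_r$ and the inner $[T[\mu Y.T/Y]]$; this is where one really needs the parametric initial-algebra property established in the previous theorem, combined with the fact that $i_{\intr{T}}$ is the identity strategy.

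For the $\mu$/$\nu$ cut-elimination reduction, the goal is to show that
\[
\intr{\pi_1}; i_{\intr{T}}; \intr{\pi_2}^\dagger \;=\; \intr{\pi_1}; \intr{T}(\intr{\pi_2}^\dagger); \intr{\pi_2},
\]
which is precisely the commutation of the initial-algebra square applied to $(\intr{A},\intr{\pi_2})$ composed on the left with $\intr{\pi_1}$. Hence this equality is exactly the defining property of $(-)^\dagger$ proved in the previous theorem, and the dual case for $\nu$ follows from the terminal-coalgebra property of $(-)^\ddagger$.

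The main obstacle I anticipate is the $[\mu Y.T]$ case of the functor-elimination rule: it requires a careful fibred/parametric version of initiality, namely that the open functor $\mu Y.\intr{T}[Y,-]$ acts on a morphism in the remaining parameter exactly by the iterated $(-)^\dagger$ construction applied to the unfolding. One must check this at the level of plays rather than appealing only to the universal property, since $\sigma^\dagger$ was defined explicitly as $\bigcup_n \sigma^{(n)}$. Once this compatibility is established, the remaining reductions are routine equational manipulations in \Games~using naturality of the structural morphisms and functoriality of the type constructors.
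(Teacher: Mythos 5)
Your outline is correct and is consistent with the paper, which in fact omits the proof entirely and only remarks that ``the main technical point is the validity of the interpretation of the functor rule; more precisely when the functor is a (least or greatest) fixpoint'' --- exactly the $[\mu Y.T]$ case you single out as the obstacle, where a parametrized form of the initiality property (and its compatibility with the explicit $\bigcup_n \sigma^{(n)}$ description of $\sigma^\dagger$) is needed. The routine parts you fill in (CCC structure for the standard cut-elimination cases, and the initial-algebra square $i_{\intr{T}};\sigma^\dagger = \intr{T}(\sigma^\dagger);\sigma$ for the $\mu$/$\nu$ cut) are accurate and are more detail than the paper itself provides.
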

In particular, this proves the following theorem which is certainly worth noticing, because $\mu LJ$ has large expressive power. 
In particular, it contains G\"odel's system T \cite{godel1958ubn}.
\begin{thm}
$\mu LJ$ is consistent: there is no proof of $\bot$.
\end{thm}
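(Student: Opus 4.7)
The plan is to argue by contradiction using the game-theoretic interpretation. Suppose $\pi$ were a proof of $\vdash \bot$ in $\mu LJ$. By the interpretation of formulas, $\intr{\bot} = \game{\bot}$, whose underlying arena has exactly one move $\bullet$, labelled $OQ$, and no other moves. By the previous theorem (soundness of the interpretation), $\intr{\pi}$ is then a well-defined morphism in $\Games$ from the empty context to $\game{\bot}$, hence in particular a total winning innocent strategy on $\game{\bot}$.

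The first step is to recall that every strategy contains the empty play $\varepsilon$. Since $\bullet$ is the unique initial move of $\game{\bot}$ and $\star \vdash_{\arena{\bot}} \bullet$, the one-move justified sequence consisting of $\bullet$ alone belongs to $\mathcal{L}_{\arena{\bot}}$. Totality of $\intr{\pi}$ then demands the existence of a Player move $b$ with $\bullet\, b \in \intr{\pi}$. But $\game{\bot}$ contains no Player move at all, so no such $b$ exists.

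This contradiction shows that no total strategy whatsoever exists on $\game{\bot}$, and a fortiori $\intr{\pi}$ cannot exist. Hence no proof $\pi$ of $\vdash \bot$ exists in $\mu LJ$. The main (only) subtlety is the appeal to the previous theorem to guarantee that the interpretation is indeed a morphism of $\Games$, i.e.\ total and winning; this in turn relies on the careful design of the winning conditions for $\mu$ and $\nu$, which blocks the infinite chattering that would otherwise let totality fail under composition.
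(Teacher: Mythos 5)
Your proposal is correct and follows exactly the paper's argument, which simply states that there is no total strategy on $\game{\bot}$; you merely spell out the details (the lone initial Opponent question $\bullet$ forces a Player response that the arena cannot supply, and soundness guarantees $\intr{\pi}$ would be such a total strategy). No issues.
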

\begin{proof}
There is no total strategy on the game $\game{\bot}$.
\end{proof}

\subsubsection{Completeness.}
When it comes to completeness, we run into the issue that the total winning innocent strategies are not necessarily finite,
hence the usual definability process does not terminate. Nonetheless, we get a definability theorem in an infinitary version
of $\mu LJ$.
Whether a more precise completeness theorem is possible is a subtle point. First, we would need to restrict to an adequate subclass of
the recursive total winning strategies (for example, the Ackermann function is definable in $\mu LJ$). Then again, the problem to find a 
proof whose interpretation is \emph{exactly} the original strategy would be highly non-trivial: if $\sigma: \mu X.T\tto A$, we have to 
\emph{guess} an invariant $B$, a proof $\pi_1$ of $T(B)\vdash B$ and a proof $\pi_2$ of $B\vdash A$ such that 
$\intr{\pi_1}^\dagger;\intr{\pi_2} = \sigma$. Perhaps it would be more feasible to look for a proof whose interpretation is 
\emph{observationally equivalent} to the original strategy, which would be very similar to the universality result in \cite{hyland2000fap}.

\section{Conclusion and Future Work}

We have successfully constructed a games model of a propositional intuitionistic sequent calculus $\mu LJ$ with inductive and coinductive
types. It is striking that the adequate winning conditions on legal plays to model (co)induction are almost identical to those used
in parity games to model least and greatest fixpoints, to the extent that the restriction of our winning condition to paths coincides
\emph{exactly} with the winning condition used in \cite{santocanale2002fmul}. It would be worthwile to investigate this connection further:
given a game viewed as a bipartite graph along with winning conditions for infinite plays, under which assumptions can these winning conditions
be canonically lifted to the set of legal plays on this graph, viewed as an arena? Results in this direction might prove useful, since
they would allow to import many game-theoretic results into game semantics, and thus programming languages.

This work is part of a larger project to provide game-theoretic models to total programming languages with dependent types, such as
COQ or Agda. In these settings, (co)induction is crucial, since they deliberately lack general recursion. We believe that in the appropriate
games setting, we can push the present results further and model Dybjer's Inductive-Recursive\cite{dybjer2000gfs} definitions.

\subsubsection{Acknowledgements.} We would like to thank Russ Harmer, Stephane Gimenez and David Baelde for stimulating discussions, 
and the anonymous referees for useful comments and suggestions.

\bibliographystyle{plain}
\bibliography{fixpoints}
\end{document}